\def\BibTeX{{\rm B\kern-.05em{\sc i\kern-.025em b}\kern-.08em
    T\kern-.1667em\lower.7ex\hbox{E}\kern-.125emX}}
\theoremstyle{theorem}
\newtheorem{theorem}{Theorem}
\newtheorem{proposition}[theorem]{Proposition}
\theoremstyle{definition}
\newtheorem{definition}[theorem]{Definition}
\newtheorem{example}[theorem]{Example}
\newtheorem{remark}[theorem]{Remark}
\newcommand{\C}{\mathcal{C}}
\newcommand{\F}{\mathbb{F}}
\newcommand{\N}{\mathbb{N}}
\newcommand{\wt}{\mathrm{wt}}
\newcommand{\G}{\mathcal{G}}
\newcommand{\dfree}{\mathrm{d_{free}}}
\newcommand{\T}{\mathcal{T}}
\begin{document}

\title{Construction of Rate $(n-1)/n$ Non-Binary LDPC Convolutional Codes via Difference Triangle Sets \\
%{\footnotesize \textsuperscript{*}Note: Sub-titles are not captured in Xplore and
%should not be used}
\thanks{The authors acknowledge the support of the Swiss National Science Foundation grant n. 188430. Julia Lieb acknowledges also the support of the German Research Foundation grant LI 3101/1-1.}
}

\author{\IEEEauthorblockN{Gianira Nicoletta Alfarano}
\IEEEauthorblockA{\textit{University of Zurich}\\
Switzerland \\
gianiranicoletta.alfarano@math.uzh.ch}
\and
\IEEEauthorblockN{Julia Lieb}
\IEEEauthorblockA{\textit{University of Zurich} \\
Switzerland  \\
julia.lieb@math.uzh.ch}
\and
\IEEEauthorblockN{Joachim Rosenthal}
\IEEEauthorblockA{\textit{Fellow, IEEE} \\
\textit{University of Zurich} \\
Switzerland  \\
rosenthal@math.uzh.ch}

}

\maketitle

\begin{abstract}
This paper provides a construction of non-binary LDPC convolutional codes, which generalizes the work of Robinson and Bernstein. The sets of integers forming an $(n-1,w)$-difference triangle set are used as supports of the columns of rate $(n-1)/n$ convolutional codes. If the field size is large enough, the Tanner graph associated to the sliding parity-check matrix of the code is free from $4$ and $6$-cycles not satisfying the full rank condition. This is important for improving the performance of a code and avoiding the presence of low-weight codewords and absorbing sets. The parameters of the convolutional code are shown to be determined by the parameters of the underlying difference triangle set. In particular, the free distance of the code is related to $w$ and the degree of the code is linked to the ``scope" of the difference triangle set. Hence, the problem of finding families of difference triangle set with minimum scope is equivalent to find convolutional codes with small degree. 
\end{abstract}

% \begin{IEEEkeywords}
% Convolutional codes, non-binary low-density parity-check codes, non-binary low-density parity-check convolutional codes, difference triangle sets.
% \end{IEEEkeywords}

% \textcolor{red}{TO DO LIST:
% \begin{itemize}
%     \item Write Introduction, Abstract, Keywords.
%     % \item Write how we extend the binary triangle work. Actually, I think we are also generalizing the other construction over $\mathbb F_q$. If we take for every of the $k$ columns the same triangle, we should exactly get the construction of Tong.
%     % \item Write what we mean by LDPC convolutional codes (like H is sparse) \textcolor{blue}{DONE}
%     % \item Prove theorems
%     % \item Construction for better rates?
%     \end{itemize}
% }
\section{Introduction}
The aim of this paper is to construct a family of non-binary low-density parity-check (NB-LDPC) convolutional codes suitable for iterative deoding. The class of LDPC block codes was introduced by Gallager \cite{gallager1962low}. Their name is due to the fact that they have a parity-check matrices that is sparse. Similarly to LDPC block codes, one can construct  LDPC convolutional codes as codes whose sliding parity-check matrices are sparse, which allows them to be decoded using iterative message-passing algorithms. %Moreover, it was recently shown that LDPC convolutional codes are suitable for practical implementation in a number of different communication scenarios, including continuous transmission and block transmission in frames of arbitrary size. 

%\textcolor{red}{Joachim said to add about Turbo Codes}
In the last few years, some attempts to construct binary LDPC convolutional codes were done. However, most of the constructions are for time-varying convolutional codes, see for instance \cite{zhou2010cycle, pusane2011deriving, battaglioni2019girth}.

In 1967, Robinson and Bernstein \cite{robinson1967class} used  difference triangle sets for the first time to construct binary recurrent codes, which are defined as the kernel of a binary sliding parity-check matrix. At that time, the theory of convolutional codes was not developed yet and the polynomial notation was not diffused, but now, we may regard recurrent codes as a first version of convolutional codes. This was the first time that a combinatorial object was used to construct convolutional codes. Three years later, Tong in \cite{tong1970systematic}, used diffuse difference triangle sets to construct self-orthogonal diffuse convolutional codes, defined by Massey \cite{massey1963threshold}.  The aim of these authors was to construct codes suitable for iterative decoding and their result was a rudimental version of binary LDPC convolutional codes. 

In this paper, we exploit the structure of difference triangle sets to construct non-binary LDPC convolutional codes, whose parity check matrices are free from 4-cycles and 6-cycles not satisfying the so called full rank condition. Our construction may be regarded as a generalization over $\F_q$ of the construction of Robinson and Bernstein. We describe a close link between the properties of the difference triangle set and the parameters of the code. Moreover, we derive information on the column distances and on the free distance of the constructed codes, by exploiting the structure of the underlying difference triangle set.

The paper is structured as follows. In Section \ref{sec:preliminaries}, we first give some useful basics of the theory of convolutional codes and then we define difference triangle sets and their scope. In Section \ref{sec:LDPC}, we define non-binary LDPC block codes and non-binary LDPC convolutional codes.  In Section \ref{sec:construction}, we give a new construction of rate $(n-1)/n$ non-binary LDPC convolutional codes, starting from an $(n-1,w)$ difference triangle set. We show how the parameters of the code are related to the properties of the triangle set and we point out that several research works in combinatorics can be exploited to improve our construction. We derive some distance properties of the codes and the exact formula for computing their density. We conclude with further comments and future research directions in Section \ref{sec:Conclusion}.

\section{Preliminaries}\label{sec:preliminaries}

\subsection{Convolutional Codes}
Let $q$ be a prime power, $\F_q$ be the finite field of order $q$ and $k,n$ be positive integers, with $k\leq n$. A rate-$k/n$ convolutional code over $\F_q$ is a submodule $\C$ of $\F_q[z]^n$ of rank $k$, such that there exists a $k\times n$ polynomial generator matrix $G(z)\in \F_q[z]^{k\times n}$ which is \emph{basic} and \emph{reduced}, i.e., it has a right polynomial inverse and  the sum of the row degrees of $G(z)$ attains the minimal possible value such that
$$\C:=\{u(z)G(z)\mid u(z) \in \F_q[z]^k\}\subseteq \F_q[z]^n.$$
If $G(z)$ is a reduced, basic generator matrix for $\C$, there exists a \emph{parity-check} matrix $H(z)\in\F_q[z]^{(n-k)\times n}$ with $H_0$ full rank such that
$$\C := \{v(z)\in\F_q[z]^n \mid H(z)v(z)^\top={0}\}.$$
We define the \emph{degree} $\delta$ of $\C$ as the highest degree of the $k\times k$ full size minors in $G(z)$. We denote a convolutional code  of rank $k/n$ and degree $\delta$ by $(n,k,\delta)_q$. 
For a polynomial vector $v(z)=\sum_{i=0}^r v_iz^i\in \C$, we define the \emph{weight} of $v(z)$ as  $\wt(v(z)) := \sum _{i=0}^r \wt(v_i)\in\N_0,$ where $\wt(v_i)$ denotes the Hamming weight of $v_i\in\F_q^n$.
The \emph{free distance}  of a convolutional code $\C$, $\dfree(\C)$, is defined as the minimum of the nonzero weights of the codewords in $\C$.
%$$\dfree:=\min\{\wt(v(z))\mid v\in\C, v\ne 0\}$$.
The parameters $\delta$ and $\dfree$ are needed to determine respectively the decoding complexity and the error correction capability of a convolutional code with respect to some decoding algorithm. For this reason, for any given rate $k/n$ and field size $q$, the aim is to construct convolutional codes with ``small'' degree $\delta$ and  ``large'' free distance $\dfree$. 

\begin{remark}
There is a natural isomorphism between $\F_q[z]^n$ and $\F_q^n[z]$ that allows to consider a generator and a parity-check matrix of a convolutional code as polynomials whose coefficients are matrices. In particular, we will consider $H(z)\in\F_q^{(n-k)\times n}[z]$, such that $H(z) = H_0+H_1z +\dots H_{\mu}z^\mu$, with $\mu>0$. With this notation, we can expand the kernel representation $H(z)v(z)^\top$ in the following way:
\begin{equation}\label{eq:kerH}
Hv^\top = \begin{bmatrix}
H_0 & & & &  \\
\vdots & \ddots & & &  \\
H_{\mu} & \cdots & H_0 & &  \\
 & \ddots &  &\ddots &  \\
 & & H_{\mu} &\cdots & H_0 \\
 & & & \ddots & \vdots \\
 & & & & H_\mu
\end{bmatrix}\begin{bmatrix}
v_0 \\ v_1 \\ \vdots \\ v_r
\end{bmatrix}=0
\end{equation}
\end{remark}

We will refer to the representation of the parity-check matrix of $\C$ in equation \eqref{eq:kerH} as \emph{sliding parity-check matrix}. 

For any $j \in \N_0$ we define the \emph{$j$-th column distance of $\C$} as
\begin{align*}
d_j^c(\C)&:= \min_{v_0\neq 0} \biggl\{\wt(v_0 +v_1z + \dots +v_jz^j) \mid v(z) 
%= \sum_{i}v_iz^i 
\in \C\biggr\}\\ 
&=\min_{v_0\neq 0} \biggl\{\wt(v_0 + \dots +v_jz^j) \mid H_j^c[v_0\cdots v_j]^{\top}=0 \biggr\}
\end{align*}

with $H_j^c:=\begin{bmatrix}
    H_0 & & & \\
    H_1 & H_0 & &  \\
    \vdots & \vdots & \ddots & \\
    H_{j} & H_{j-1} & \cdots & H_0
    \end{bmatrix}$.
    
We recall the following result.
\begin{theorem}\cite[Proposition 2.2]{gluesing2006strongly}\label{cd}
Let $d\in\N$. Then the following properties are equivalent.
\begin{enumerate}
\item $d_j^c = d$.
\item None of the first $n$ columns of $H_j^c$ is contained in the span of any other $d -2$ columns and one of the first $n$ columns of  $H_j^c$  is in the span of some other $d-1$ columns of that matrix.

\end{enumerate} 
\end{theorem}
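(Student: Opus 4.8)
The plan is to reduce both statements to the language of linear dependencies among the columns of $H_j^c$ and then compare them directly. I identify a polynomial vector $v(z)=v_0+v_1z+\dots+v_jz^j$ with its coefficient tuple $[v_0\cdots v_j]\in\F_q^{n(j+1)}$, so that the kernel condition $H_j^c[v_0\cdots v_j]^\top=0$ says precisely that the columns of $H_j^c$ indexed by the nonzero coordinates of $[v_0\cdots v_j]$ are linearly dependent, with those coordinates as coefficients. The central observation is that $\wt(v(z))=\sum_i\wt(v_i)$ equals the Hamming weight of the coefficient tuple, i.e. the number of columns of $H_j^c$ entering the dependency with nonzero coefficient. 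Moreover, the constraint $v_0\neq 0$ in the definition of $d_j^c$ corresponds exactly to requiring that the dependency involve at least one of the first $n$ columns of $H_j^c$, namely those indexed by $v_0$.

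First I would prove $(1)\Rightarrow(2)$. Assuming $d_j^c=d$, minimality provides a codeword with $v_0\neq 0$ and $\wt(v(z))=d$; choosing one of the first $n$ columns carrying a nonzero coefficient and solving the dependency for it exhibits that column as a combination of the remaining $d-1$ columns, which is the second clause of $(2)$. For the first clause I argue by contraposition: if one of the first $n$ columns lay in the span of some $d-2$ other columns, moving it to the other side would yield a nonzero kernel element with $v_0\neq 0$ of Hamming weight at most $d-1$, contradicting $d_j^c=d$.

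Conversely, for $(2)\Rightarrow(1)$, the first clause forbids any kernel element with $v_0\neq 0$ of weight at most $d-1$, since such an element would express one of its active first-$n$ columns through the at most $d-2$ remaining active columns; this gives $d_j^c\geq d$. The second clause produces a kernel element with $v_0\neq 0$ of weight at most $d$, giving $d_j^c\leq d$, and the two bounds together yield $d_j^c=d$.

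The one place requiring care, and the step I expect to be the main obstacle, is the bookkeeping relating ``$h$ lies in the span of $m$ columns'' to ``there is a codeword of weight $m+1$''. In one direction, a column written as a combination of $m$ others may in fact use fewer than $m$ of them with nonzero coefficient, so one must pass to the minimal subset actually involved; in the other, a dependency on fewer columns must be padded (with zero-coefficient columns) up to the stated count. Neither affects the argument once the bounds are tracked: in the step establishing $d_j^c\leq d$ one invokes the already-proved lower bound $d_j^c\geq d$ to conclude that this minimal subset has exactly $d-1$ elements, so the resulting codeword has weight precisely $d$ and not less.
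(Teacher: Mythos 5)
Your proof is correct. Note that the paper does not prove this statement at all: it is quoted verbatim from the cited reference (Gluesing-Luerssen, Rosenthal and Smarandache, Proposition~2.2), so there is no in-paper argument to compare against. Your argument is the standard one behind that result: identifying kernel vectors of $H_j^c$ with linear dependencies among its columns, translating $\wt(v(z))=\sum_i \wt(v_i)$ into the number of active columns, and matching the constraint $v_0\neq 0$ with the requirement that at least one of the first $n$ columns be active. You also correctly dispose of the only delicate points --- that a claimed span over $d-2$ (resp.\ $d-1$) columns may effectively involve fewer, and conversely that smaller dependencies must be padded --- by passing to the minimal active subset and invoking the already-established bound $d_j^c\geq d$ where needed, so the two inequalities $d_j^c\geq d$ and $d_j^c\leq d$ close without circularity.
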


\subsection{Difference Triangle Sets}
A difference triangle set is a collection of sets of integers such that any integer can be written in at most one way as difference of two elements in the same set. Difference triangle sets find application in combinatorics, radio systems, optical orthogonal codes and other areas of mathematics \cite{klove1989bounds, chee1997constructions, chen1992disjoint}. We refer to \cite{colbourn1996difference} for a more detailed treatment. More formally, we define difference triangle sets in the following way.

\begin{definition}
An $(N,M)$-\emph{difference triangle set} (DTS) is a set $\T:=\{T_1, T_2, \dots, T_N\}$, where for any $1\leq i\leq N$, $T_i:=\{a_{i,j} \mid 1\leq j\leq M\}$ is a set of nonnegative integers such that $a_{i,1} <a_{i,2} < \cdots <a_{i,M}$ and all the differences $a_{i,j}-a_{i,k}$, with $1\leq i\leq N$ and $1\leq k <j \leq M$ are distinct. When $N =1$, we will refer to a $(1,M)$-DTS simply as DTS. 
\end{definition}

An important parameter characterizing an $(N,M)$-DTS $\T$ is the \emph{scope} $m(\T)$, that is defined as
$$ m(\T):= \max\{a_{i,M} \mid 1\leq i \leq N\}.$$ 

Observe that, a very well-studied problem in combinatorics is finding families of $(N,M)$-DTSs with minimum scope. In this work, we will use the sets in a DTS as supports of the columns in the sliding parity-check matrix of a convolutional code. We will relate the scope of the DTS with the degree of the code. Since we want to minimize the degree of the code, it is evident that the mentioned combinatorial problem plays a crucial role also here.

\section{Low-Density Parity-Check Codes}\label{sec:LDPC}

\subsection{Non-Binary LDPC Codes}
In this section we briefly introduce LDPC block codes and we focus in particular on their non-binary version. We extend then the notion to LDPC convolutional codes.

LDPC codes are known for their performances near the Shannon-limit over the additive white Gaussian noise channel \cite{mackay1996near}. Their non-binary (NB-LDPC) version was first investigated by Davey and Mackay in 1998 in \cite{davey1998low}. In \cite{davey1998monte}, it was observed that NB-LDPC codes defined over a finite field with $q$ elements can have better performances than the binary ones. A NB-LDPC code is defined as the kernel of an $N\times M$ sparse (at least 1/2 of the entries are zeros) matrix $H$ with entries in $\F_q$. We can associate to $H$ a bipartite graph $\G = (V,E)$, called \emph{Tanner graph}, where $V = V_s \cup V_c$ is the set of vertices. In particular, $V_s=\{v_1,\dots, v_N\}$ is the set of \emph{variable nodes} and $V_c=\{c_1,\dots,c_M\}$ is the set of \emph{check nodes}. $E\subseteq V_s\times V_c$ is the set of edges, with $e_{n,m}=(v_n,c_m)\in E$ if and only if $h_{n,m}\ne 0$. The edge $e_{n,m}$ connecting a check node and a variable node is labelled by $h_{n,m}$, that is the  corresponding \emph{permutation node}. For an even integer $\ell$, we call a simple closed path consisting of $\ell/2$ check nodes and $\ell/2$ variable nodes in $\G$ an \emph{$\ell$-cycle}. The length of the shortest cycle is called the \emph{girth} of $\G$ or girth of $H$. It is proved that having higher girth decreases the decoding failure of the bit flipping algorithm. Moreover, in \cite{poulliat2008design} the authors showed that short cycles in a NB-LDPC code may be harmful if they do not satisfy the so called full rank condition (FRC). This is because if the FRC is not satisfied, the short cycles produce low-weight codewords or they form absorbing sets, \cite{amiri2014analysis}.

In \cite{poulliat2008design} and in \cite{amiri2014analysis} it is shown that an $\ell$-cycle in a NB-LDPC code with parity check matrix $H$ can be represented by an $\frac{\ell}{2}\times \frac{\ell}{2}$ submatrix of $H$ of the form

% \begin{theorem}\label{thm:cyclesandminors}
% Let $H$ be the parity check of a non-binary LDPC block code. Then, by applying row/column permutations
% properly to H, a cycle of length $\ell$ in the graph $\mathcal{G}$ associated to $H$ can be represented
% as an $\frac{\ell}{2}\times \frac{\ell}{2}$ submatrix of H, given by

\begin{equation}\label{eq:cycles}
A=\begin{bmatrix}
a_1 & a_2 & 0 & \cdots &\cdots & 0 \\
0 & a_3 & a_4 & \cdots  &\cdots & \vdots\\
\vdots & & \ddots & & & \vdots \\
\vdots & & & \ddots & & \vdots\\
0 & & & & a_{\ell-3} & a_{\ell-2} \\
a_\ell & 0 & \cdots & \cdots &0 & a_{\ell-1}
\end{bmatrix},
\end{equation}
where $a_i\in\F_q^\ast$. The cycle does not satisfy the FRC if $\det(A)=0$. In this case, the cycle gives an absorbing set. Hence, it is a common problem to construct NB-LDPC codes in which the shortest cycles satisfy the FRC.

The convolutional counterpart of NB-LDPC block codes is given by convolutional codes defined over a finite field $\F_q$ whose sliding parity-check matrix is sparse.

\section{Construction of Rate $(n-1)/n$ NB-LDPC convolutional Codes}\label{sec:construction}
In this section we will provide a construction of NB-LDPC convolutional codes over $\F_q$, with the aid of difference triangle sets. In a certain sense, this could be regarded as an extension over $\F_q$ of the construction given by Robinson and Bernstein.

Let $\F_q$ be the finite field of order $q=p^N$, where $p$ is a prime number. 

We are going to construct a sliding parity-check matrix as in equation \eqref{eq:kerH}. Observe that the decoding of a convolutional code $\C$ is done sequentially by blocks of length $n$, hence, the error-correcting properties of the code are determined by the decoding of the first block (see also \cite{wyner1963analysis}). In particular, it is sufficient to analyze the portion of the sliding parity-check matrix $H$ which affects the decoding of the first block, namely
\begin{equation}\label{eq:slidingportion}
  \mathcal{H}:= H_{\mu}^c=\begin{bmatrix}
    H_0 & & & \\
    H_1 & H_0 & &  \\
    \vdots & \vdots & \ddots & \\
    H_{\mu} & H_{\mu-1} & \cdots & H_0
    \end{bmatrix}.
\end{equation}

First of all, observe that since $H_0$ is full rank, one can perform Gaussian elimination on the block $\begin{bmatrix}
H_0^\top & H_1^\top & \cdots & H_{\mu}^\top
\end{bmatrix}^\top$,
% $\begin{bmatrix}
% H_0 \\ H_1 \\ \vdots \\H_{\mu} 
% \end{bmatrix}$,
which results in the following block matrix:
\begin{equation}\label{eq:matrixH}
\bar{H}=\begin{bmatrix}
    A_0 & | & I_{n-k}  \\
    A_1 & | & 0 \\
    \vdots & & \vdots \\
    A_{\mu} & | & 0
\end{bmatrix},\end{equation}
where $A_i\in \F_q^{(n-k)\times k}$ for $i=1,\hdots, \mu$. 
With an abuse of notation, we will still write $H_0$ for indicating  $[A_0|I_{n-k}]$, and $H_i$ for the matrices $[A_i | 0]$.

Note that it is important to construct the sliding parity-check matrix $H$ of a NB-LDPC convolutional code such that the Tanner graph $\mathcal{G}$ associated to $H$ does not contain short cycles not satisfying the FRC. It is easy to see that $H$ satisfies this property if and only if $\mathcal{H}$ does. By the discussion of the previous section, this is equivalent to construct $\mathcal{H}$, such that all the $2\times 2$ and $3\times 3$ minors that are non-trivially zero, are non-zero.

In the following we focus on the construction of rate $(n-1)/n$ NB-LDPC convolutional codes. In particular, we will construct the matrices $A_i\in \F_q^{1\times (n-1)}$, such that the resulting matrix $\mathcal{H}$ does not contain $4$-cycles and $6$-cycles, not satisfying the FRC.

\subsection{Construction}
Let $n,w$ be positive integers. Consider an $(n-1, w)$-DTS $\T:=\{T_1, \dots, T_{n-1}\}$. 
Each $T_k$ will give the positions of the non-zero elements of the first $n-1$ columns of the matrix $\bar{H}$ of equation \eqref{eq:matrixH}; the last column will be simply given by the vector $[1,0,\dots, 0]^\top$.

\begin{definition}\label{Construction}
With the notation above, define the matrix $\bar{H}^{\T}\in\F_q^{m(\T)\times n}$, in which 
%In particular, consider $T_k :=\{a_{k,1}, \dots, a_{k,w}\}$, 
the $k$-th column has weight $w$ and support $T_k:=\{a_{k,1}, \dots, a_{k,w}\}$. Formally, let $\alpha$ be a primitive element for $\F_q$, so that any non-zero element of $\F_q$ can be written as power of $\alpha$. For any $1\leq i \leq m(\T)$, $1\leq k\leq n-1$, 
$$
\bar{H}^\T_{i,k} = \begin{cases}\alpha^{ik} & \text{ if } i \in T_k \\
0 & \text{ otherwise}
\end{cases}.$$
The last column of $\bar{H}^\T$ is given by $[1,0,\cdots,0]^\top$.
% Once we have constructed $\bar{H}^\T$ in this way, it is immediate to
Derive  the matrix $\mathcal{H}^\T$ by ``shifting" the columns of $\bar{H}^\T$ and then a sliding matrix $H^\T$ of the form of equation \eqref{eq:kerH}. Finally, define $\C^\T := \ker(\mathcal{H}^\T)$ over $\F_q$.
%let $\C^\T$ be the rate $(n-1)/n$ convolutional code defined as $\C^\T := \ker(H^\T)$ over $\F_q$. 
Note that here $\mu = m(\T)-1$.

\end{definition}

\begin{example}\label{ex:Construction}
Let $\F_q:=\{0,1,\alpha, \dots, \alpha^{q-2}\}$ 
%be the finite field with four elements. Let 
and $\T$ be a $(2,3)$-DTS, such that $T_1:=\{1,2,6\}$ and $T_2:=\{1,2,4\}$. 
Then, with the notation above,
$$\bar{H}^\T= \begin{bmatrix}
\alpha & \alpha^{2} & 1 \\
\alpha^{2} & \alpha^{4} & 0 \\
0 & 0 & 0 \\
0 & \alpha^6 & 0 \\
0 & 0 & 0 \\
\alpha^6 & 0 & 0 
\end{bmatrix},$$
which leads to the sliding matrix in Figure \ref{fig:slid1}.

\begin{figure*}[b!]
\centering
$$\mathcal{H}^\T= \left[\begin{array}{cccccccccccccccccc}
\alpha & \alpha^{2} & 1 & & & & & & & &  \\

\alpha^{2} & \alpha^{4} & 0 & \alpha & \alpha^2 & 1 & & & & & & & & & & & \\

0 & 0 & 0  & \alpha^{2} & \alpha^{4} & 0  & \alpha & \alpha^2 & 1 & & & & & & & & \\

0 & \alpha^8 & 0 & 0 & 0 & 0  & \alpha^{2} & \alpha^{4} & 0  & \alpha & \alpha^2 & 1 & & & & & \\

0 & 0 & 0 &  0 & \alpha^8 & 0&  0 & 0 & 0  & \alpha^{2} & \alpha^{4} & 0  & \alpha & \alpha^2 & 1 & & &  \\

\alpha^6 & 0 & 0 & 0 & 0 & 0 &  0 & \alpha^8 & 0 & 0 & 0 & 0  & \alpha^{2} & \alpha^{4} & 0  & \alpha & \alpha^2 & 1 \\
\end{array}\right]$$
\caption{Sliding parity-check matrix for the code in Example \ref{ex:Construction}.\label{fig:slid1}}
\end{figure*}
\end{example}

\begin{example}\label{ex:2}
Let $\F_q:=\{0,1,\alpha, \dots, \alpha^{q-2}\}$ 
%be the finite field with four elements. Let 
and $\T$ be a $(2,3)$-DTS, such that $T_1:=\{1,2,6\}$ and $T_2:=\{2,3,5\}$. 
Then, with the notation above,
$$\bar{H}^\T= \begin{bmatrix}
\alpha & 0 & 1 \\
\alpha^{2} & \alpha^{4} & 0 \\
0 & \alpha^6 & 0 \\
0 & 0 & 0 \\
0 & \alpha^{10} & 0 \\
\alpha^6 & 0 & 0 
\end{bmatrix},$$
which leads to the sliding matrix in Figure \ref{fig:2}.
\begin{figure*}[b!]
$$\mathcal{H}^\T= \left[\begin{array}{cccccccccccccccccc}
\alpha & 0 & 1 & & & & & & & &  \\

\alpha^{2} & \alpha^{4} & 0 & \alpha & 0 & 1 & & & & & & & & & & & \\

0 & \alpha^6 & 0  & \alpha^{2} & \alpha^{4} & 0  & \alpha & 0 & 1 & & & & & & & & \\

0 & 0 & 0 & 0 & \alpha^6 & 0  & \alpha^{2} & \alpha^{4} & 0  & \alpha & 0 & 1 & & & & & \\

0 & \alpha^{10} & 0 &  0 & 0 & 0&  0 & \alpha^6 & 0  & \alpha^{2} & \alpha^{4} & 0  & \alpha & 0 & 1 & & &  \\

\alpha^6 & 0 & 0 & 0 & \alpha^{10} & 0 &  0 & 0 & 0 & 0 & \alpha^6 & 0  & \alpha^{2} & \alpha^{4} & 0  & \alpha & 0 & 1 \\
\end{array}\right]
$$
\caption{Sliding parity-check matrix for the code in Example \ref{ex:2}.\label{fig:2}}
\end{figure*}
\end{example}

\begin{proposition}
Let $\T$ be an $(n-1,w)$-DTS with scope $m(\T)$. Then, the code $\C^\T$ given as in Definition \ref{Construction} is an $(n,n-1,m(\T)-1)_q$ convolutional code. 
% with minimum distance $d=$ and column distances $d_j^c=$ where $\delta$ is the maximal degree of the full size minors of $H$. In particular, $\mu\leq\delta\leq\mu(n-k)$.\\
% \textcolor{green}{If $H_{\mu}$ has no zero rows (and I think it should be no problem to choose the triangles such that this holds) and for rate at least $1/2$ our construction has always $\delta=\mu(n-k)$.}
\end{proposition}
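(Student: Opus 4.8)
The plan is to verify each of the three claimed parameters in the notation $(n,n-1,\delta)_q$: the length $n$, the rank $n-1$, and the degree $\delta = m(\T)-1$. The length is immediate, since by construction $\bar{H}^\T$ has exactly $n$ columns, so the sliding matrix $\mathcal{H}^\T$ and the associated polynomial matrix $H^\T(z)$ act on vectors in $\F_q[z]^n$. For the rank, I would argue that $H^\T(z)$ is an $1\times n$ (more precisely $(n-k)\times n = 1\times n$) polynomial parity-check matrix whose constant term $H_0^\T = [A_0 \mid I_{n-k}] = [A_0 \mid 1]$ is full rank by the form in equation \eqref{eq:matrixH}. Since $H_0^\T$ has full row rank $n-k=1$, the matrix $H^\T(z)$ has full row rank over $\F_q[z]$, so $\C^\T = \ker(\mathcal{H}^\T)$ is a submodule of $\F_q[z]^n$ of rank $n-(n-k) = n-1$, as required.

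The substantive part is the degree. First I would recall that $\mu = m(\T)-1$ from Definition \ref{Construction}: the matrix $\bar{H}^\T$ has $m(\T)$ rows (its row index $i$ runs from $1$ to $m(\T)$), and these rows become the coefficients $H_0^\T, H_1^\T, \dots, H_\mu^\T$ of the polynomial parity-check matrix, so the highest-degree coefficient is indexed by $\mu = m(\T)-1$. I would then need to show that $H_\mu^\T \neq 0$, so that $H^\T(z)$ genuinely has polynomial degree $\mu$ and does not collapse to something smaller: this holds because the last row of $\bar{H}^\T$ contains the entry in position $(m(\T),k)$ for the index $k$ achieving $a_{k,w}=m(\T)$ in the definition of the scope, and that entry equals $\alpha^{m(\T)\,k}\neq 0$. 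The final step is to pass from the parity-check degree $\mu$ of $H^\T(z)$ to the code degree $\delta$ as defined via the generator matrix. For a convolutional code with an $(n-k)\times n$ parity-check matrix $H(z)$ having $H_0$ of full rank, the code degree $\delta$ equals the sum of the row degrees of $H(z)$ when $H(z)$ is row-reduced; here there is a single row, and after the Gaussian elimination producing \eqref{eq:matrixH} its degree is exactly $\mu = m(\T)-1$.

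The main obstacle I anticipate is making the last step rigorous rather than merely plausible: namely, confirming that the degree $\delta$ of $\C^\T$, which is defined as the maximal degree of the full-size $(n-1)\times(n-1)$ minors of a basic reduced generator matrix $G(z)$, really equals the degree $\mu$ of the (single-row) parity-check matrix. The cleanest route is to invoke the standard duality for convolutional codes: when $H(z)$ is a minimal (canonical) parity-check matrix with $H_0$ full rank, the code degree coincides with the sum of the row degrees of $H(z)$, equivalently with $\deg H(z)$ in the $1\times n$ case. I would therefore need to check that the $H^\T(z)$ arising from \eqref{eq:matrixH} is in the right reduced/minimal form, which it is: the elimination yielding $[A_0 \mid I_{n-k}]$ as constant term guarantees $H_0$ full rank and no common polynomial factor inflating the apparent degree. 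Granting this, all three parameters follow and the proof concludes.
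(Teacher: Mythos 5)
The paper states this proposition without proof, so there is no ``paper route'' to compare against; judged on its own, your argument is correct and supplies exactly the verification the authors left implicit. The length and rank claims are fine (the kernel of a nonzero $1\times n$ polynomial row is automatically a pure submodule of rank $n-1$, hence admits a basic reduced generator matrix), and your identification of $\mu=m(\T)-1$ together with the observation that $H_\mu\neq 0$ because the scope is attained by some column $k$, giving the nonzero entry $\alpha^{m(\T)k}$ in the last row, is the right substantive step. One point deserves tightening: you justify basicness of $H^\T(z)$ by saying the elimination yielding $[A_0\mid I_{n-k}]$ guarantees ``no common polynomial factor,'' but full rank of $H_0$ alone does not rule out a common factor $f(z)$ with $f(0)\neq 0$. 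The correct reason, available in this construction, is that the last column of $\bar{H}^\T$ is $[1,0,\dots,0]^\top$, so the $n$-th entry of $H^\T(z)$ is the constant polynomial $1$; hence the gcd of the entries is a unit, $H^\T(z)$ is basic, and the standard duality (the full-size minors of a basic generator matrix agree, up to units, with the complementary full-size minors of a basic parity-check matrix, so the code degree equals the maximal entry degree of the $1\times n$ matrix $H^\T(z)$, namely $\mu$) applies to give $\delta=\mu=m(\T)-1$. With that one repair your proof is complete.
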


%\textcolor{blue}{I think this does not require any explanation. The next theorem I don`t know if should be included or modified, maybe we can insert in the discussion for other rates.}

\begin{remark}
As already mentioned, an interesting problem in combinatorics is to find  families of difference triangle sets having minimum scope \cite{chee1997constructions, klove1989bounds, colbourn1996difference}. This is a difficult task in general. For our application, it is desirable to have a difference triangle set $\T$ whose scope is as small as possible so that the degree of $\C^\T$ is small as well. This is desirable for convolutional codes because the complexity of the decoding algorithm increases with $\delta$. 
\end{remark}

\begin{theorem}\label{distance}
Let $\mathcal{T}$ be an $(n-1,w)$-DTS and consider the matrix %assume that all the columns of 
$\begin{bmatrix}
    A_0^\top & \cdots &    A_{\mu}^\top
\end{bmatrix}^\top$ defined as in the previous construction.
%have weight $w$ and 
Denote by $w_j$ the minimal column weight of $\begin{bmatrix}
    A_0^\top & \cdots &    A_{j}^\top
\end{bmatrix}^\top$. For $I\subset\{1,\hdots,\mu+1\}$ and $J\subset\{1,\hdots,n(\mu+1)\}$ we define  $[\mathcal{H}^\T]_{I;J}$ as the submatrix of $\mathcal{H}^\T$ with row indices $I$ and column indices $J$. Assume that for all $I, J$ with $|I|=|J|\leq w$ and $j_1:=\min(J)\leq n-1$ and $I$ containing the indices where column $j_1$ is nonzero, we have that the first column of $[\mathcal{H}^\T]_{I;J}$ is not contained in the span of the other columns of $[\mathcal{H}^\T]_{I;J}$. Then 
\begin{itemize}
    \item[(i)] $\dfree(\C^\T)=w+1$,
    \item[(ii)] $d_j^c=w_j+1$.
\end{itemize}
\end{theorem}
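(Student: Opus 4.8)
The plan is to prove (ii) through the column-distance characterization of Theorem~\ref{cd}, and then obtain (i) as the special case $j=\mu$. The backbone is the systematic shape of the construction: since the last column of $\bar H^\T$ is $[1,0,\dots,0]^\top$, the $n$-th entry of the $1\times n$ parity-check polynomial is $H^{(n)}(z)=1$, while for $k\le n-1$ one has $H^{(k)}(z)=\sum_{a\in T_k}\alpha^{ak}z^{a-1}$. Thus a codeword is parametrised freely by $v^{(1)},\dots,v^{(n-1)}$ through $v^{(n)}(z)=-\sum_{k=1}^{n-1}H^{(k)}(z)\,v^{(k)}(z)$, and a truncated codeword $(v_0,\dots,v_j)\in\ker H_j^c$ with $v_0\neq0$ of weight $d$ is nothing but a linear dependency among $d$ columns of $H_j^c$ that involves at least one of the first $n$ columns.

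To verify condition (2) of Theorem~\ref{cd} with $d=w_j+1$ (the upper bound $d_j^c\le w_j+1$), I would choose $k_0\le n-1$ attaining $w_j=|T_{k_0}\cap\{1,\dots,j+1\}|$ and take the codeword with $v^{(k_0)}=1$ and all other information symbols zero. Then $v^{(n)}=-H^{(k_0)}(z)$, whose restriction to degrees $\le j$ has exactly $w_j$ nonzero coefficients, each occupying a distinct shifted copy of the last column $[1,0,\dots,0]^\top$. This writes the first-block column $k_0$ as a combination of $w_j=d-1$ of these ``parity'' columns, with $v_0\neq0$.

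For condition (1) --- no column among the first $n$ of $H_j^c$ lies in the span of $w_j-1$ others, which gives $d_j^c\ge w_j+1$ --- the essential input is that $\T$ being a difference triangle set forces any two distinct columns of $\mathcal{H}^\T$, including two shifts of one generator column, to meet in at most one nonzero row; a common pair of rows would yield equal differences $a_1-a_2=a_1'-a_2'$ among the $T_k$, contradicting the defining distinctness. Suppose then that a first-$n$ column lay in the span of $w_j-1$ others; passing to a minimal dependency, let $J$ be its support and $j_1=\min(J)$, so $|J|\le w_j$. The corresponding dependency is a truncated codeword with $v_0\neq0$, so by the systematic structure its least active index is that of an information symbol, whence $j_1\le n-1$. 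Taking $I$ to be the rows where column $j_1$ is nonzero, the stated hypothesis applies to $[\mathcal{H}^\T]_{I;J}$ and forbids column $j_1$ from lying in the span of the remaining columns; concretely, by the single-overlap property each other column meets the $\ge w_j$ nonzero rows of column $j_1$ in at most one row, so covering them all already needs $w_j$ columns, contradicting $|J|-1\le w_j-1$.

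Combining (1) and (2) yields $d_j^c=w_j+1$, which is (ii). For (i) I would specialise to $j=\mu=m(\T)-1$: there every one of the first $n-1$ columns of $\bar H^\T$ carries its full support $T_k$ of size $w$, so $w_\mu=w$ and hence $d_\mu^c=w+1$. Since truncation cannot increase weight and the code is shift-invariant, $d_j^c\le\dfree$ for all $j$, so $\dfree\ge w+1$; the codeword above (with any $k_0$) has full weight exactly $w+1$, giving $\dfree=w+1$. The step I expect to be the main obstacle is condition (1): one must make the bookkeeping of the hypothesis genuinely match the dependency --- argue that $j_1$ falls in $\{1,\dots,n-1\}$, treat the parity column $[1,0,\dots,0]^\top$ separately, and reconcile the size constraint $|I|=|J|\le w$ with the fact that column $j_1$ may carry more than $|J|$ nonzero rows. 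It is exactly here that the distinct-differences (single-overlap) property of the DTS is indispensable.
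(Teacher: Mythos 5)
Your proposal is correct, and its skeleton matches the paper's: the upper bound is the same explicit codeword (one information symbol set to $1$, the parity coordinate absorbing $-H^{(k_0)}(z)$, total weight $w_j+1$), and the lower bound restricts the dependency to the rows where the first active column is nonzero, exactly as in the paper's proof of (i). You differ in two substantive ways. First, you reverse the logical order, proving (ii) via Theorem \ref{cd} and deducing (i) at $j=\mu$ using $d_j^c\le\dfree$ and $w_\mu=w$; the paper proves (i) directly and dismisses (ii) as ``analogous reasoning.'' Your order is sound and actually cleaner, since the paper's stated hypothesis literally concerns only submatrices whose rows contain the \emph{full} ($w$-element) support of column $j_1$, so its application to the truncated systems arising in (ii) is exactly the bookkeeping gap you flag. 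Second --- and this is the genuine added value --- your single-overlap covering argument replaces the rank hypothesis with a purely combinatorial fact: by the distinct-differences property of the DTS, any two distinct columns of $\mathcal{H}^\T$ (including two shifts of the same generator column, and trivially the weight-one parity columns) share at most one support row, so $w_j-1$ columns cannot cover the $\ge w_j$ nonzero rows of an unshifted information column. This shows the theorem's span assumption is in fact \emph{automatically satisfied} by the construction, over any field and independently of the choice of nonzero entries $\alpha^{ik}$, whereas the paper invokes it as an unverified hypothesis (and reads it as full column rank, which is even slightly stronger than what is stated). What the paper's route buys is brevity; what yours buys is an unconditional theorem plus a rigorous treatment of the truncation in (ii). One loose end: your claim that the least active index satisfies $j_1\le n-1$ needs the row-one argument you only gesture at --- columns from blocks $s\ge1$ have support in rows $\ge2$, so a dependency whose only block-zero column is the parity column $e_1$ is killed at row one, forcing a block-zero information column into any minimal dependency; with that spelled out (and the companion observation that in a putative dependency expressing $e_1$, every other column has truncated weight $\le w_j-1$, hence cannot reach row one), the parity column among the first $n$ is covered and the proof is complete.
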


%\begin{theorem}
% Assume that all the columns of 
% $\begin{bmatrix}
%     A_0^\top & \cdots &    A_{\mu}^\top
% \end{bmatrix}^\top$ have weight $w$ and denote by $w_j$ the minimal column weight of $\begin{bmatrix}
%     A_0^\top & \cdots &    A_{j}^\top
% \end{bmatrix}^\top$. For $I\subset\{1,\hdots,\mu+1\}$ and $J\subset\{1,\hdots,n(\mu+1)\}$ we define  $[\mathcal{H}]_{I;J}$ as the submatrix of $\mathcal{H}$ with row indices $I$ and column indices $J$. Assume that for all $J$ with $j_1:=\min(J)\leq n$ and $I$ containing the indices where column $j_1$ is nonzero, we have that the first column of $[\mathcal{H}]_{I;J}$ is contained in the span of the other columns of $[\mathcal{H}]_{I;J}$. Then 
% \begin{itemize}
%     \item[(i)] $\dfree(\C)=w+1$, where $\C:=\ker(\mathcal{H})$,
%     \item[(ii)] $d_j^c=w_j+1$.
% \end{itemize}
% \end{theorem}

\begin{proof}
(i) Without loss of generality, we can assume that the first entry of $H_0$ is nonzero. Let $M\subset\{1,\hdots,\delta+1\}$ with $|M|=w$ be the set of positions where the first column of $\mathcal{H}$ (and hence also the first column of the sliding parity-check matrix) has nonzero entries. Denote the values of these nonzero entries by $d_1,\hdots, d_w$. Then, $v(z)=\sum_{i=0}^r v_iz^i$ with $v_0=[1\ 0\cdots 0\ -d_1]$ and $v_i=\begin{cases}[0\cdots 0]& \text{for}\ i+1\notin M\\ [ 0\cdots 0\ -d_{i+1}]& \text{for}\ i+1\in M\end{cases}$ for $i\geq 1$ is a codeword with $\wt(v(z))=w+1$.
Hence $\dfree\leq w+1$.
%Furthermore, $\mathcal{H}v^{\top}=0$ since $v_i=0$ for $i>\delta$. 

Assume by contradiction that there exists a codeword $v\neq 0$ with weight $d\leq w$. We can assume that $v_0\neq 0$, i.e. there exists $i\in\{1,\ldots,n\}$ with $v_{0,i}\neq 0$. One knows $\mathcal{H}^\T v^{\top}=0$. Of this homogeneous system of equations, where we consider the nonzero components of $v_0, v_1, \ldots, v_{\deg(v)}$ as variables, we take only the rows where column $i$ of $\mathcal{H}^\T$ has nonzero entries. We end up with a homogeneous system with $w$ equations and $d$ variables, whose coefficient matrix has full column rank according to the assumptions of the theorem. This implies $v=0$, what is a contradiction.\\
(ii) The result follows from Theorem \ref{cd} with an analogue reasoning as in part (i).
\end{proof}

\begin{remark}\label{rem:dist}
With the assumptions of Theorem \ref{distance}, one has $d_j^c=\dfree(\C^\T)$ for $j\geq\mu$. Moreover, one achieves higher column distances (especially for small $j$) if the elements of $\mathcal{T}$ are small.
\end{remark}

\begin{proposition}\label{density}
If $N$ is the maximal message length, i.e. for any message $v$, $\deg(v)+1\leq N/n$, then the sliding parity-check matrix of a convolutional code derived in Definition \ref{Construction} has density 
$$\frac{w(n-1)+1}{\mu n+N}.$$
\end{proposition}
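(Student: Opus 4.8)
The plan is to compute the density straight from its definition as the ratio of the number of nonzero entries of the sliding parity-check matrix $H^\T$ to its total number of entries, checking that the boundary of the band contributes no correction so that the claimed closed form falls out after a single cancellation.

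First I would pin down the dimensions of $H^\T$ in the extremal case. Since $\deg(v)+1\leq N/n$ with equality for the maximal message length, the matrix in the form of \eqref{eq:kerH} has $L:=N/n$ block columns, one per coefficient $v_0,\dots,v_{L-1}$, hence $Ln=N$ scalar columns. For the rows I would read off the banded block-Toeplitz shape of \eqref{eq:kerH}: the first block column occupies block-rows $1,\dots,\mu+1$ with the stack $H_0,\dots,H_\mu$, and each further block column is shifted one block-row down, so the last one places $H_\mu$ in block-row $L+\mu$. As each $H_i$ is a single scalar row for rate $(n-1)/n$ (here $n-k=1$), the matrix has exactly $L+\mu=N/n+\mu$ scalar rows, and therefore $N\cdot(N/n+\mu)=N(N+\mu n)/n$ entries in total.

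Next I would count the nonzeros. Every block column of $H^\T$ is a shifted vertical copy of $\bar{H}^\T=\begin{bmatrix}H_0^\top&\cdots&H_\mu^\top\end{bmatrix}^\top$, whose support is fixed by Definition \ref{Construction}: its first $n-1$ columns have weight $w$ (column $k$ is supported on $T_k$, and $|T_k|=w$) and its last column $[1,0,\dots,0]^\top$ has weight $1$, for $w(n-1)+1$ nonzeros in all. Because the row count $L+\mu$ is exactly large enough to house all $\mu$-fold downward shifts, no block column is clipped at the top or bottom, so each contributes the full $w(n-1)+1$ nonzeros and the matrix has $L(w(n-1)+1)=(N/n)(w(n-1)+1)$ nonzero entries. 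Dividing the two counts and cancelling the common factor $N$ gives $\dfrac{w(n-1)+1}{\mu n+N}$.

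The one delicate point I would verify carefully is the row count, in particular the $\mu$ trailing block-rows produced by the shifting at the bottom of the band, since an off-by-one there, or a wrong assumption that the extreme block columns are truncated and hence carry fewer than $w(n-1)+1$ nonzeros, would break the clean cancellation of $N$. I would also record the standing divisibility assumption $n\mid N$, needed so that $L=N/n$ is an integer number of blocks.
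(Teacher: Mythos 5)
Your proposal is correct and follows exactly the route of the paper's (one-line) proof: the paper simply divides the number of nonzero entries by the total number of entries and declares the result immediate, while you carry out the same count explicitly. Your bookkeeping checks out --- $N/n$ block columns each contributing the full $w(n-1)+1$ nonzeros of $\bar{H}^{\T}$ without truncation, $N/n+\mu$ scalar rows since $n-k=1$, and the cancellation of $N$ yields $\frac{w(n-1)+1}{\mu n+N}$ --- so you have in effect supplied the details the paper leaves to the reader.
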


\begin{proof}
To compute the density of a matrix, one has to divide the number of nonzero entries by the total number of entries. The result follows immediately.
\end{proof}

\begin{theorem}\label{thm:2x2minors}
Let $\T$ be an $(n-1,w)$-DTS with scope $m(\T)$ and $\F_q$ be the finite field with $q$ elements with $q>(n-1)\delta+1=(n-1)(m(\T)-1)+1$. Let $\C^\T$ be the rate $(n-1)/n$ convolutional code defined over $\F_q$ from $\T$, with $\mathcal{H}^\T$ as defined in \eqref{eq:slidingportion}. %\textcolor{red}{Assume that the support of any column of $\mathcal{H}$ intersect with the support of its shift at most once.} 
Then, all the $2\times 2$ minors in $\mathcal{H}^\T$ that are non-trivially zero are non-zero.
\end{theorem}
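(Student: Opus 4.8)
The plan is to show that no $2\times 2$ submatrix of $\mathcal{H}^\T$ having all four entries nonzero can have vanishing determinant; such a submatrix is exactly what the full rank condition forbids, while every other $2\times 2$ minor is trivially zero because of the support pattern. First I would observe that each ``identity'' column of $\mathcal{H}^\T$ — i.e. a column coming from the last column $[1,0,\dots,0]^\top$ of a shifted copy of $\bar{H}^\T$ — carries exactly one nonzero entry, since $\bar{H}^\T_{i,n}\neq 0$ only for $i=1$. A column with a single nonzero entry cannot be one of the two columns of a $4$-cycle, so both columns of any candidate submatrix must arise from the difference triangle set, i.e. have local index $k\in\{1,\dots,n-1\}$.

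Next I would set up coordinates for the surviving entries. A generic nonzero entry of $\mathcal{H}^\T$ sitting in row $r$, block-column $s$ and local position $k\le n-1$ equals $\alpha^{(r-s+1)k}$ and is present precisely when $r-s+1\in T_k$. A candidate $4$-cycle is then determined by two rows $r_1<r_2$ and two columns with data $(s_1,k_1)$ and $(s_2,k_2)$, all four membership conditions $r_a-s_b+1\in T_{k_b}$ holding. Computing the determinant of
\[
\begin{bmatrix}
\alpha^{(r_1-s_1+1)k_1} & \alpha^{(r_1-s_2+1)k_2}\\
\alpha^{(r_2-s_1+1)k_1} & \alpha^{(r_2-s_2+1)k_2}
\end{bmatrix},
\]
I find that the two diagonal products have exponents differing by exactly $(r_2-r_1)(k_2-k_1)$; hence the determinant vanishes if and only if
\[
(r_2-r_1)(k_2-k_1)\equiv 0 \pmod{q-1}.
\]
This bilinear collapse of the exponent is the heart of the argument and uses only the product form $ik$ of the exponents in Definition~\ref{Construction}.

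It then remains to rule out $(r_2-r_1)(k_2-k_1)\equiv 0$. The first factor is nonzero since $r_1\neq r_2$. To see the second factor is nonzero, suppose $k_1=k_2=:k$; then $T_k$ would contain the two pairs $\{r_1-s_1+1,\,r_1-s_2+1\}$ and $\{r_2-s_1+1,\,r_2-s_2+1\}$, both realizing the difference $s_2-s_1$ (which is nonzero, since otherwise the two columns coincide). Because $r_1\neq r_2$ these pairs are distinct, contradicting the defining property of a DTS that all differences inside a single $T_k$ are distinct; hence $k_1\neq k_2$. Consequently $(r_2-r_1)(k_2-k_1)$ is a nonzero integer with
\[
0<\bigl|(r_2-r_1)(k_2-k_1)\bigr|\le \delta\,(n-2)<(n-1)\delta<q-1,
\]
using $1\le r_2-r_1\le\mu=\delta$ and $1\le|k_2-k_1|\le n-2$ together with the hypothesis $q>(n-1)\delta+1$. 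An integer of absolute value strictly smaller than $q-1$ cannot be a nonzero multiple of $q-1$, so it is not $\equiv 0\pmod{q-1}$ and the determinant is nonzero.

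I expect the main obstacle to be the careful bookkeeping of the middle two steps: correctly translating a $4$-cycle into the triple $(r,s,k)$ data for both columns, and recognizing that the repeated-difference obstruction of the DTS is precisely what forces $k_1\neq k_2$ (and what lets me discard the single-nonzero-entry columns). Once the exponent difference is seen to factor as $(r_2-r_1)(k_2-k_1)$, the field-size bound finishes the argument essentially for free — indeed the slightly weaker hypothesis $q>(n-2)\delta+1$ already suffices.
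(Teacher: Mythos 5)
Your proof is correct and follows essentially the same route as the paper: reduce to $2\times 2$ submatrices with all four entries nonzero, use the DTS property to rule out two shifts of the same column overlapping twice (forcing $k_1\neq k_2$), factor the exponent difference as $(r_2-r_1)(k_2-k_1)$, and conclude from the field-size bound that this cannot vanish modulo $q-1$. Your treatment is in fact slightly tighter than the paper's, since by explicitly discarding the weight-one last column you get the bound $\delta(n-2)$ rather than the paper's $\delta(n-1)$ (the paper allows the exponent index $j=0$ for that column), so the weaker hypothesis $q>(n-2)\delta+1$ already suffices in your version.
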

\begin{proof}
The only $2\times 2$ minors to check are the ones of the form $\begin{vmatrix}a_1 & a_2\\ a_3 & a_4
\end{vmatrix}$. By definition of DTS, the support of any column of $\mathcal{H}^\T$ intersects the support of its shift at most once. This ensures that the columns of all these minors are the shift of two different columns of $\bar{H}^\T$. Moreover, all the elements in the minor are powers of $\alpha$. In particular, let $1\leq i,r \leq \delta$, $0\leq j,k \leq n-1$ (note that $j<k$ or $k<j$ according to which columns from $\bar{H}^\T$ are involved in the shifts). Hence we have that:
\begin{align*}
& \begin{vmatrix}a_1 & a_2\\ a_3 & a_4
 \end{vmatrix}  =
 \begin{vmatrix}\alpha^{ij} & \alpha^{lk}\\ \alpha^{(i+r)j} & \alpha^{(l+r)k}
\end{vmatrix}  = \\
%a_2a_3 - a_1a_4 
&\alpha^{ij}\alpha^{(l+r)k} - \alpha^{lk}\alpha^{(i+r)j} =
\alpha^{ij + lk}(\alpha^{rk}-\alpha^{rj})
\end{align*}
which is $0$ if and only if $rk = rj \mod (q-1)$. Since it holds that $0\leq j < k \leq n-1$ or $0\leq k < j \leq n-1$  and $1\leq r \leq \delta$, this can not happen.
\end{proof}

%\textcolor{red}{Maybe we should provide a formula for the density (1-sparsity) of our code, which is equal to $\frac{(w-1)k+n}{\mu n+N}$ where $N$ is the total number of symbols of the codeword}

\begin{theorem}\label{thm:3x3minors}
Let $\T$ be an $(n-1,w)$-DTS with scope $m(\T)$, $w\geq 3$ and $\F_q$ be the finite field with $q>2$ elements with $q=p^N$, where $N>(\delta-1) (n-2)=(m(\T)-2)( n-2)$. Let $\C^\T$ be the rate $(n-1)/n$ convolutional code defined over $\F_q$ from $\T$, with $\mathcal{H}^\T$ as defined in \eqref{eq:slidingportion}. %\textcolor{red}{Assume that the support of any column of $\mathcal{H}$ intersect with the support of its shift at most once.} 
Then, all the $3\times 3$ minors in $\mathcal{H}^\T$ that are non-trivially zero are non-zero.
\end{theorem}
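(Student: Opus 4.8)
The plan is to mirror the structure of the proof of Theorem \ref{thm:2x2minors}, but now analyzing a $3\times 3$ submatrix of $\mathcal{H}^\T$ whose determinant is non-trivially zero. First I would determine which $3\times 3$ patterns can possibly be non-trivially non-singular. As in the $2\times 2$ case, the DTS property forces the support of any column of $\bar{H}^\T$ to intersect the support of any shift (of itself or of another column) in at most one position. This combinatorial constraint severely restricts how three shifted columns of $\bar{H}^\T$ can overlap so as to produce a $3\times 3$ submatrix with a non-trivial zero pattern of the form \eqref{eq:cycles}; I would enumerate these configurations and discard those forcing a structural (trivial) zero. The surviving configuration is essentially the cyclic pattern in \eqref{eq:cycles} with $\ell=6$, namely
\begin{equation*}
A=\begin{bmatrix}
a_1 & a_2 & 0\\
0 & a_3 & a_4\\
a_6 & 0 & a_5
\end{bmatrix},
\end{equation*}
whose determinant is $\det(A)=a_1a_3a_5+a_2a_4a_6$ (up to sign).

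Next I would substitute the explicit entries $a_t=\alpha^{(i_t)(k_t)}$ coming from Definition \ref{Construction}, where each exponent is a product of a row index $i_t$ (from the relevant block position, hence involving the shift amount $r$) and a column index $k_t\in\{1,\dots,n-1\}$. The condition $\det(A)=0$ then becomes an equation of the form $\alpha^{e_1}+\alpha^{e_2}=0$ in $\F_q$, i.e. $\alpha^{e_1-e_2}=-1$, where $e_1,e_2$ are explicit integer combinations of the row and column indices. The goal is to show this cannot hold under the hypothesis on $N$. Since $q=p^N$ and $p$ is the characteristic, $-1$ has multiplicative order $1$ if $p=2$ and order $2$ otherwise; in either case $\alpha^{e_1-e_2}=-1$ forces a congruence constraint on $e_1-e_2$ modulo a divisor of $q-1$. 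I would translate the hypothesis $N>(m(\T)-2)(n-2)$ into a bound showing that the exponent difference $e_1-e_2$, controlled by the spread of the column indices (at most $n-2$) and of the shift/row indices (at most $\delta-1=m(\T)-2$), is too small in absolute value to realize the required congruence, thereby contradicting $\det(A)=0$.

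The main obstacle I expect is the bookkeeping in the second and third steps: correctly identifying, from the DTS and shift structure, exactly which row indices $i_t$ and column indices $k_t$ appear in the non-trivially-zero $3\times 3$ pattern, and then verifying that the resulting exponent gap $e_1-e_2$ is genuinely bounded by the quantity $(m(\T)-2)(n-2)$ that appears in the hypothesis on $N$. In particular, the appearance of $N$ (rather than $q-1$) in the bound signals that the argument exploits the subfield structure: $\alpha^{e_1-e_2}=-1$ with $-1\in\F_p$ would force $e_1-e_2$ to be a multiple of $(q-1)/(p-1)$ (for $p$ odd), which is of size roughly $p^{N-1}$, and the task is to show the controlled exponent gap is strictly smaller than this. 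Making the relation between the additive/multiplicative structure precise, and handling the $p=2$ versus $p$ odd cases uniformly, is where the delicate counting lies; the restriction $w\geq 3$ and $q>2$ are presumably needed precisely to guarantee that a genuine $3\times 3$ cyclic pattern exists and that $-1\neq 1$ in the relevant subcases.
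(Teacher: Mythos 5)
Your reduction of all non-trivially-zero $3\times 3$ patterns to the single cyclic ($\ell=6$) pattern of \eqref{eq:cycles} is exactly where the proposal diverges from the paper's proof, and it is a genuine gap. You justify the reduction by asserting that the DTS property forces any two columns of $\mathcal{H}^\T$ to overlap in at most one position. That holds only if all differences are distinct \emph{across} the sets $T_i$; the paper's operative convention is weaker (in Example \ref{ex:Construction}, $T_1=\{1,2,6\}$ and $T_2=\{1,2,4\}$ both contain the difference $1$, and in Figure \ref{fig:slid1} columns $1$ and $2$ of $\mathcal{H}^\T$ overlap in two rows), and indeed the proof of Theorem \ref{thm:2x2minors} is built around dense $2\times 2$ minors, which can only arise when two \emph{different} columns meet twice. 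Consequently the paper's proof of Theorem \ref{thm:3x3minors} must, and does, treat two further configurations that your enumeration discards: the fully dense $3\times 3$ minor (its Case I) and a seven-entry pattern (its Case III). Your plan covers only its Case II.

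This omission cannot be repaired with the tools in your plan. For the cyclic pattern, vanishing of the determinant is a binomial relation $\alpha^{e_1}+\alpha^{e_2}=0$, and the order-of-$(-1)$ congruence argument you sketch is essentially the paper's Case II (which needs only a size bound of the form $q>2(n-3)+2(\delta-2)(n-2)+1$). But for a dense minor the vanishing condition is a signed \emph{six}-term relation $\sum_i \pm\alpha^{e_i}=0$, which no statement about the order of $-1$ controls. The paper handles it by the observation your proposal lacks: the spread of the exponents is at most $(\delta-1)(n-2)$, so after factoring out $\alpha^{\min_i e_i}$ the relation exhibits $\alpha$ as a root of a nonzero polynomial over $\F_p$ of degree less than $N$, which is impossible because a primitive element of $\F_{p^N}$ has degree exactly $N$ over $\F_p$. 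This is precisely where the hypothesis $N>(\delta-1)(n-2)$ --- a condition on the exponent $N$ rather than on the size of $q$ --- does its work; your attempt to explain the role of $N$ through the necessary condition that $e_1-e_2$ be a multiple of $(q-1)/(p-1)$ is a workable alternative for the binomial case (though note $\alpha^{e}=-1$ actually forces $e\equiv(q-1)/2 \pmod{q-1}$ for odd $p$), but it does not extend to the six-term case. Finally, even under the strict textbook reading of a DTS, under which your combinatorial reduction would in fact be valid, the reduction is asserted rather than proved; as it stands the proposal proves the theorem only for one of the three families of minors the paper checks.
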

\begin{proof}
We need to distinguish different cases. 

\underline{\textbf{Case I}}. The $3\times 3$ minors are of the form $\begin{vmatrix}a_1 & a_2 & a_3\\ a_4 & a_5 & a_6 \\ a_7 & a_8 & a_9
\end{vmatrix}$, with $a_i \ne 0$ for any $i$. As we observed in Theorem \ref{thm:2x2minors}, in this case all the columns are shifts of three different columns from $\bar{H}^\T$. Hence we have that, given $1\leq i,l,t \leq \delta-3$, $r,s>0$, with $r\ne s$ and $2\leq i+r, l+r, t+r\leq \delta - 1$ and $4\leq i+r+s, l+r+s, t+r+s \leq \delta$, the minors are given by
\begin{gather*}
    \begin{vmatrix}a_1 & a_2 & a_3\\ a_4 & a_5 & a_6 \\ a_7 & a_8 & a_9
\end{vmatrix} = \begin{vmatrix}
\alpha^{ij} & \alpha^{lk} & \alpha^{tm} \\
\alpha^{(i+r)j} & \alpha^{(l+r)k} & \alpha^{(t+r)m}\\
\alpha^{(i+r+s)j} & \alpha^{(l+r+s)k} & \alpha^{(t+r+s)m}\\
\end{vmatrix}.
\end{gather*}
This determinant is $0$ if and only if 
\begin{gather}\label{eq:determinant3x3}
\alpha^{rk+rm+sm} + \alpha^{rm+rj+sj}+ \alpha^{rj+rk+sk}=\\
\alpha^{rk+rj+sj}+ \alpha^{rj+rm+sm}+ \alpha^{rk+rm+sk}.    
\end{gather}
Without loss of generality we can assume that $j<k<m$ and it turns out that the maximum exponent in equation \eqref{eq:determinant3x3} is $rk+rm+sm$ while the minimum is $rk + rj + sj$. Let $M:=rk+rm+sm - (rk + rj + sj)$. We immediately see that the maximum value for $M$ is $(\delta-1)(n-2)$ hence this determinant can not be zero because $\alpha$ is a primitive element for $\F_q$ and, by assumption, $q=p^N$, where $N>M$. 

\underline{\textbf{Case II}}. The $3\times 3$ minors are of the form $\begin{vmatrix}a_1 & a_2 & 0\\ 0 & a_3 & a_4 \\ a_6 & 0 & a_5
\end{vmatrix}$. Arguing as before, we notice that given $1\leq i,l,t \leq \delta-3$, $r,s>0$, with $r\ne s$ and $2\leq i+r, l+r, t+r\leq \delta - 1$ and $4\leq i+r+s, l+r+s, t+r+s \leq \delta$, the minors are given by
\begin{gather*}
    \begin{vmatrix}
\alpha^{ij} & \alpha^{lk} & 0 \\
0 & \alpha^{(l+r)k} & \alpha^{(t+r)m}\\
\alpha^{(i+r+s)j} & 0 & \alpha^{(t+r+s)m}\\
\end{vmatrix} =\\ \alpha^{ij+lk+tm+rm}(\alpha^{rk+sm}+\alpha^{rj+sj}).
\end{gather*}
This determinant is $0$ whenever $r(k-j) + s(m-j) - (q-1)/2=0 \mod (q-1)$. If $q>2(n-3) + 2(\delta-2)(n-2)+1$ this never happens.  
% ($N>\max\{rk+sm, rj+sj\}$ This determinants are nonzero  $N>\max\{(\delta-1)(n-2)+(n-3), \delta(n-3)\} = (\delta-1)(n-2)+(n-3)$ (because we can assume $\delta>1$ since $w\geq 3$). 
And this is the case for our field size assumption.

\underline{\textbf{Case III}}. The $3\times 3$ minors are of the form $\begin{vmatrix}a_1 & a_2 & 0\\  a_3 & a_4 & a_5\\ a_6 & 0 & a_7
\end{vmatrix}$. As in the first cases, we can assume that,  for $1\leq i,l,t \leq \delta-3$, $r,s>0$, with $r\ne s$ and $2\leq i+r, l+r, t+r\leq \delta - 1$ and $4\leq i+r+s, l+r+s, t+r+s \leq \delta$, the minor is given by 
\begin{gather*}
    \begin{vmatrix}
\alpha^{ij} & \alpha^{lk} & 0 \\
\alpha^{(i+r)j} & \alpha^{(l+r)k} & \alpha^{(t+r)m}\\
\alpha^{(i+r+s)j} & 0 & \alpha^{(t+r+s)m}\\
\end{vmatrix}.
\end{gather*}
By  following  the  reasoning  of  the  previous  cases, if $N>(\delta-1)(n-2)-1$, this determinant is nonzero which is always the case, because of the field size assumption.
\end{proof}

\begin{example}
In Example \ref{ex:Construction}, one has $d_0^c=2$, $d_1^c=d_2^c=d_3^c=d_4^c=3$ and $d_5=\dfree=4$.
\end{example}

\begin{example}
In Example \ref{ex:2}, one has $d_0^c=1$, $d_1^c=2$, $d_2^c=d_3^c=d_4^c=3$ and $d_5=\dfree=4$.
\end{example}

\begin{remark}
With Theorems \ref{thm:2x2minors} and \ref{thm:3x3minors} we can ensure that the $4$ and $6$-cycles in the Tanner graph associated to codes $\C^\T$ defined over $q = p^N$, with $N>(\delta-1) (n-2)$ satisfy the FRC. This improves the performances of our NB-LDPC convolutional codes. 

Moreover, it is possible to reduce the required field size for the construction of $\C^\T$ by restricting the conditions on the DTS $\T$ and still ensuring that all the  $4$ and $6$-cycles satisfy the FRC. In particular, we can get rid of the \emph{Case I} of Theorem \ref{thm:3x3minors} by imposing that the sets in $\T$ pairwise intersect at most twice and also the support of one column intersects the support of the shifts of any column at most twice, to ensure that all columns of $\mathcal{H}^\T$ intersect at most twice. We will leave these considerations for future works.
\end{remark}

\section{Conclusion and Future Research Works}\label{sec:Conclusion}
In this paper, we gave a construction of rate $(n-1)/n$ convolutional codes over non-binary fields, generalizing a construction from Robinson and Bernstein, using difference triangle sets. We related the important parameters of the codes with the parameters of the considered DTS, pointing out how combinatorics can help in solving applied problems (in this case minimizing the degree $\delta$ of the code). 

Generalizations of this work will be addressed in an extended version. In particular, minors of $\mathcal{H}^\T$ of larger size than $3\times 3$ could be considered to derive convolutional codes with larger distances. Unfortunately, this may require a larger field size.

Moreover, Theorem \ref{distance}, Remark \ref{rem:dist} and Theorem \ref{density} can be generalized to arbitrary  rates $k/n$. However, it is not completely trivial anymore to compute the degree $\delta$ with the help of the parity-check matrix of the code.

%\textcolor{red}{ Maybe we can also mention for future work that we consider bigger minors to get a better distance; I think it can be done with similar computations and enlarging the field size}
%\textcolor{green}{For space reasons I removed the last theorem.}

% \begin{theorem}
% Convolutional codes with parity-check matrix fulfilling the conditions of Theorem \ref{distance} for general $k$ are $(n,k,\delta)_q$ convolutional codes with minimum distance $d=w$ and column distances $d_j^c=w_j$ where $\delta$ is the maximal degree of the full size minors of $H$. In particular, $\mu\leq\delta\leq\mu(n-k)$ and if $H_{\mu}$ has full rank, one has $\delta=\mu(n-k)$.
% %\textcolor{green}{If $H_{\mu}$ has no zero rows (and I think it should be no problem to choose the triangles such that this holds) and for rate at least $1/2$ our construction has always $\delta=\mu(n-k)$.}
% \end{theorem}
% \begin{proof}
% As $H_0$ is full rank, also $H(z)$ is full rank. Hence, $\C$ is a convolutional code of rate $k/n$. Since $H(z)=[\tilde{H}(z)\ I_{n-k}]$ for some $\tilde{H}(z)\in\mathbb F_q[z]^{(n-k)\times k}$, $H(z)$ has a polynomial right inverse. Hence, $\delta$ is the maximal degree of the full size minors of $H$.
% %and $\mu\leq\delta\leq\mu(n-k)$.
% \end{proof}

\newpage
\bibliographystyle{abbrv}

\bibliography{references}

\end{document}